\documentclass[twocolumn,conference]{IEEEtran}
\usepackage[T1]{fontenc}
\usepackage[latin9]{inputenc}
\usepackage{amsmath}
\usepackage{amsthm}
\usepackage{amssymb}
\usepackage{graphicx}
\usepackage[numbers,sort&compress]{natbib}
\usepackage[unicode=true,
 bookmarks=true,bookmarksnumbered=true,bookmarksopen=true,bookmarksopenlevel=1,
 breaklinks=false,pdfborder={0 0 1},backref=false,colorlinks=false]
 {hyperref}
\hypersetup{pdftitle={Your Title},
 pdfauthor={Your Name},
 pdfborderstyle=,pdfpagelayout=OneColumn,pdfnewwindow=true,pdfstartview=XYZ,plainpages=false}

\makeatletter
\theoremstyle{plain}
\newtheorem{lem}{\protect\lemmaname}
\theoremstyle{plain}
\newtheorem{prop}{\protect\propositionname}
\theoremstyle{remark}
\newtheorem{rem}{\protect\remarkname}

\usepackage[justification=centering]{caption}
\usepackage{cuted}

\renewcommand{\maketag@@@}[1]{\hbox{\m@th\normalsize\normalfont#1}}

\usepackage{etoolbox}
\usepackage{xstring}\usepackage{mfirstuc}\usepackage{textcase}\usepackage[acronym]{glossaries}
\newcommand{\newac}{\newacronym}
\newcommand{\ac}{\gls}

\newcommand{\acpl}{\glspl}

\makeglossaries
\hyphenation{op-tical net-works semi-conduc-tor}
\newac{isac}{ISAC}{integrated sensing and communication}
\newac{an}{AN}{artificial noise}
\newac{csi}{CSI}{channel state information}
\newac{csit}{CSIT}{channel state information at transmitter}
\newac{fp}{FP}{fractional planning}
\newac{bme}{BME}{beampattern matching error}
\newac{ula}{ULA}{uniform linear array}
\newac{cu}{CU}{communication user}
\newac{uavs}{UAVs}{unmanned aerial vehicles}
\newac{sinr}{SINR}{signal-to-interference-plus-noise ratio}
\newac{sdr}{SDR}{semi-definite relaxation}
\newac{1d}{1D}{one-dimensional}
\newac{zf}{ZF}{zero-forcing}
\newac{aod}{AoD}{angle of departure}
\newac{los}{LoS}{line-of-sight}
\newac{nlos}{NLoS}{non-line-of-sight}
\newac{bs}{BS}{base station}
\newac{bss}{BSs}{base stations}
\newac{cus}{CUs}{communication users}
\newac{snr}{SNR}{signal-to-noise ratio}
\newac{evd}{EVD}{eigenvalue decomposition}
\newac{qsdp}{QSDP}{quadratic semidefinite programing}
\newac{dof}{DoF}{degrees of freedom}
\newac{soc}{SOC}{second order cone}
\newac{qos}{QoS}{quality-of-service}
\newac{lmi}{LMI}{linear matrix inequality}
\newac{bti}{BTI}{Bernstein-type inequality}
\newac{doa}{DoA}{direction of arrival}
\newac{crb}{CRB}{Cram{\'e}r-Rao bound}
\newac{sdp}{SDP}{semidefinite programing}
\newac{kkt}{KKT}{Karush-Kuhn-Tucker}
\newac{sca}{SCA}{successive convex approximation}

\makeatother

\providecommand{\lemmaname}{Lemma}
\providecommand{\propositionname}{Proposition}
\providecommand{\remarkname}{Remark}

\begin{document}
\title{Fundamental CRB-Rate Tradeoff in Multi-antenna Multicast Channel with
ISAC}
\author{\IEEEauthorblockN{$\mathrm{\textrm{Zixiang Ren}}^{1,2}$, $\textrm{Xianxin Song}^{2}$,
$\textrm{Yuan Fang}^{2}$, Ling Qiu$^{1}$, and Jie Xu$^{2}$}\IEEEauthorblockA{$^{1}$Key Laboratory of Wireless-Optical Communications, Chinese
Academy of Sciences, \\
 School of Information Science and Technology, University of Science
and Technology of China}\IEEEauthorblockA{$^{2}$School of Science and Engineering and Future Network of Intelligence
Institute (FNii),\\
 The Chinese University of Hong Kong, Shenzhen} \IEEEauthorblockA{E-mail: rzx66@mail.ustc.edu.cn, xianxinsong@link.cuhk.edu.cn, fangyuan@cuhk.edu.cn,
lqiu@ustc.edu.cn, xujie@cuhk.edu.cn}}
\maketitle
\begin{abstract}
With technical advancements, how to support common data broadcasting
and sensing at the same time is a new problem towards the next-generation
multiple access (NGMA). This paper studies the multi-antenna multicast
channel with integrated sensing and communication (ISAC), in which
a multi-antenna base station (BS) sends common messages to a set of
single-antenna communication users (CUs) and simultaneously estimates
the parameters of an extended target via radar sensing. Under this
setup, we investigate the fundamental performance tradeoff between
the achievable rate for communication and the estimation Cramér-Rao
bound (CRB) for sensing. First, we derive the optimal transmit covariance
in semi-closed form to maximize the achievable rate while ensuring
the maximum estimation CRB constraint subject to a maximum transmit
power constraint at the BS, and accordingly characterize the outer
bound of the so-called CRB-rate (C-R) region. It is shown that the
optimal transmit covariance should be of full rank, consisting of
both information-carrying and dedicated sensing signals in general.
Next, we consider a practical joint information and sensing beamforming
design, and propose an efficient approach to optimize the joint beamforming
for balancing the C-R tradeoff. Numerical results are presented to
show the C-R region achieved by the optimal transmit covariance and
the joint beamforming, as compared to the benchmark scheme with isotropic
transmission. 
\end{abstract}

\begin{IEEEkeywords}
Integrated sensing and communication (ISAC), multicast, Cramér-Rao
bound, convex optimization. 
\end{IEEEkeywords}

\IEEEpeerreviewmaketitle{}

\section{Introduction}

With advancements in webcast and content broadcasting applications,
how to efficiently support common data broadcasting to multiple users
over the so-called multicast channel is becoming an increasingly important
problem in future beyond fifth-generation (B5G) and sixth-generation
(6G) wireless networks. On the other hand, \ac{isac} has emerged
as a promising technique to enable dual-functional B5G and 6G wireless
networks, which provide both sensing and communication services \citep{liu2021integrated}.
As a result, how to simultaneously support common data broadcasting
and sensing over the multicast channel is one of the new problems
towards the next-generation multiple access (NGMA).

Recently, the multi-antenna or multiple-input multiple-output (MIMO)
techniques have become an important solution to enhance the \ac{isac}
performance. By equipping multiple antennas at \ac{bs}, MIMO can
exploit the spatial multiplexing and diversity gains to increase the
communication rate and reliability \citep{heath2018foundations},
and provide spatial and waveform diversity gains to enhance the sensing
accuracy and resolution \citep{haimovich2007mimo,LiStoJ07}. There
have been various prior works investigating the waveform and beamforming
designs for multi-antenna ISAC. In general, there are several waveform
design approaches, namely sensing-centric, communication-centric,
and unified waveform designs \citep{liu2021integrated}. Among them,
utilizing properly designed unified waveforms is particularly promising
to maximize the ISAC performance with enhanced spectrum utilization
efficiency. For instance, the authors in \citep{LiuZhouJ18,LiuHuangNirJ20,hua2021optimal}
presented the transmit beamforming in downlink multiuser ISAC systems
over a broadcast channel, in order to optimize the transmit beampattern
for sensing and the \ac{sinr} for communication. The authors in
\citep{NOMAISAC} investigated the transmit beamforming design in
a broadcast channel for ISAC with non-orthogonal multiple access (NOMA).
In addition, \citep{yin2022rate} investigated the C-R tradeoff for
ISAC in multi-antenna broadcast channels with the emerging rate-splitting
multiple access (RSMA) technique.

How to characterize the sensing and communication performance limits
from estimation theory and information theory perspectives is a fundamental
question in \ac{isac} systems (see, e.g., \citep{liu2021integrated}).
While the channel capacity serves as the communication rate limits
(upper bound), the \ac{crb} can act as the sensing performance
limits for target parameters estimation, by providing the variance
lower bound of any unbiased estimators \citep{liu2022survey,liu2021cramer,xiong2022flowing,Haocheng2022}.
Therefore, understanding the \ac{crb}-rate (C-R) tradeoff is an
important problem to reveal the fundamental ISAC limits. For instance,
the authors in \citep{liu2021cramer} optimized the CRB in multiuser
broadcasting ISAC systems with transmit beamforming, subject to \ac{sinr}
(or equivalently rate) constraints. \citep{xiong2022flowing} presented
the C-R region for a point-to-point MIMO ISAC system, and \citep{Haocheng2022}
characterized the whole Pareto boundary of the C-R region for a MIMO
ISAC system with an extended target.

Different from prior works studying the multi-antenna ISAC over point-to-point
and broadcast channels, this paper investigates the multi-antenna
ISAC over a multicast channel towards NGMA, in which a multi-antenna
\ac{bs} sends common messages to a set of single-antenna \acpl{cu}
and simultaneously uses the echo messages to estimate an extended
sensing target. To our best knowledge, how to characterize the fundamental
capacity and C-R tradeoff for the multicast channel with ISAC has
not been studied in the literature yet, thus motivating our current
work.

In particular, we characterize the Pareto boundary of the C-R region
for the new multi-antenna multicast ISAC system, and present practical
joint beamforming design. First, we define the C-R region as the set
of the estimation CRB and the multicast rate pairs that can be simultaneously
achieved by the \ac{isac} system, and obtain two boundary points
corresponding to CRB minimization and rate maximization, respectively.
Then, to characterize the complete Pareto boundary, we present a new
CRB-constrained multicast rate maximization problem, and derive the
optimal covariance solution in semi-closed form by applying the Lagrange
duality methods. It is shown that the optimal covariance should be
of full rank, which consists of both information-carrying and dedicated
sensing signals in general. Furthermore, we also consider practical
joint communication and sensing beamforming designs for multicast
ISAC, and develop an efficient algorithm based on \ac{sca} to find
a high-quality joint beamforming solution to balance the C-R tradeoff.
Finally, we provide numerical results to show the achievable C-R regions
by the optimal covariance and transmit beamforming, as compared to
the benchmark scheme with isotropic transmission.

\textit{Notations}: Vectors and matrices are denoted by bold lower-
and upper-case letters, respectively. $\mathbb{C}^{N\times M}$ denotes
the space of $N\times M$ complex matrices. $\boldsymbol{I}$ and
$\boldsymbol{0}$ represent an identity matrix and an all-zero matrix
with appropriate dimensions, respectively. For a square matrix $\boldsymbol{A}$,
$\textrm{tr}(\boldsymbol{A})$ denotes its trace, and $\boldsymbol{A}\succeq\boldsymbol{0}$
means that $\boldsymbol{A}$ is positive semi-definite. For a complex
arbitrary-size matrix $\boldsymbol{B}$, $\textrm{rank}(\boldsymbol{B})$,
$\boldsymbol{B}^{T}$, $\boldsymbol{B}^{H}$, and $\boldsymbol{B}^{c}$
denote its rank, transpose, conjugate transpose, and complex conjugate,
respectively. $\mathbb{E}(\cdot)$ denotes the stochastic expectation,
$\|\cdot\|$ denotes the Euclidean norm of a vector, and $|\cdot|$
and $\mathrm{Re}(\cdot)$ denote the absolute value and the real component
of a complex entry. $\mathcal{CN}(\boldsymbol{x},\boldsymbol{Y})$
denotes a circularly symmetric complex Gaussian (CSCG) random vector
with mean vector $\boldsymbol{x}$ and covariance matrix $\boldsymbol{Y}$.
$\boldsymbol{A}\otimes\boldsymbol{B}$ represents the Kronecker product
of two matrices $\boldsymbol{A}$ and $\boldsymbol{B}$.

\section{System Model}

We consider an \ac{isac} system over a multicast channel, in which
a BS sends common messages to $K>1$ CUs indexed by $\mathcal{K}\overset{\triangle}{=}\{1,\dots,K\}$
and uses the echo signals to estimate an extended sensing target.
Suppose that the BS is equipped with $N_{t}>1$ transmit antennas
and $N_{r}\ge N_{t}$ receive antennas, and each CU is equipped with
a single antenna.

First, we consider the ISAC signal transmission at the BS. Let $\boldsymbol{x}(n)\in\mathbb{C}^{N_{t}\times1}$
denote the transmitted unified signal for ISAC at symbol $n$, which
is assumed to be an independent CSCG random vector with mean $\boldsymbol{0}$
and covariance matrix $\boldsymbol{S}_{x}=\mathbb{E}(\boldsymbol{x}(n)\boldsymbol{x}^{H}(n))$,
i.e., $\boldsymbol{x}(n)\sim\mathcal{CN}\big(\boldsymbol{0},\boldsymbol{S}_{x}\big),\forall n$.
Suppose that the maximum transmit power budget is $P$. We have the
transmit power constraint as
\begin{equation}
\mathrm{tr}(\boldsymbol{S}_{x})\leq P.\label{eq:Average power}
\end{equation}

Next, we consider the multicast channel for communication. Let $\boldsymbol{h}_{k}\in\mathbb{C}^{N_{t}\times1}$
denote the channel vector from the BS to CU $k\in\mathcal{K}$. The
received signal at the receiver of \ac{cu} $k\in\mathcal{K}$ is
given by
\begin{equation}
y_{k}(n)=\boldsymbol{h}_{k}^{H}\boldsymbol{x}(n)+z_{k}(n),\label{eq:Received com signal}
\end{equation}
where $z_{k}(n)$ denotes the noise at the receiver of CU $k$ that
is a CSCG random variable with zero mean and variance $\sigma^{2}$,
i.e., $z_{k}(n)\sim\mathcal{CN}(0,\sigma^{2}),\forall k\in\mathcal{K}$.
We assume quasi-static channel models, in which the channel vectors
$\{\boldsymbol{h}_{k}\}$ remain unchanged over the transmission blocks
of interest. In order to characterize the fundamental performance
limits, we assume that the BS perfectly knows the global channel state
information of $\{\boldsymbol{h}_{k}\}_{k=1}^{K}$, and each CU $k$
perfectly knows the local CSI of $\boldsymbol{h}_{k}$. Based on the
received signal in (\ref{eq:Received com signal}), the received \ac{snr}
at \ac{cu} $k\in\mathcal{K}$ is
\begin{equation}
\gamma_{k}=\mathbb{E}\bigg(\frac{\big|\boldsymbol{h}_{k}^{H}\boldsymbol{x}(n)\big|^{2}}{\big|z_{k}(n)\big|^{2}}\bigg)=\frac{\boldsymbol{h}_{k}^{H}\boldsymbol{S}_{x}\boldsymbol{h}_{k}}{\sigma^{2}}.\label{eq:SNR for com}
\end{equation}
Accordingly, the achievable rate of the multicast channel \citep{jindal2006capacity}
with given transmit covariance $\boldsymbol{S}_{x}$ is given by
\begin{equation}
R(\boldsymbol{S}_{x})=\underset{k\in\mathcal{K}}{\min}\log_{2}\Big(1+\frac{\boldsymbol{h}_{k}^{H}\boldsymbol{S}_{x}\boldsymbol{h}_{k}}{\sigma^{2}}\Big).\label{eq:rate}
\end{equation}
Then, we consider the radar sensing for estimating an extended target.
We focus on a particular radar processing interval with a total of
$L$ symbols. The extended target is modeled as a surface with $M$
distributed point-like scatters \citep{liu2021cramer}. The angle
of arrival or departure (AoA/AoD) of the $m$-th scatter is denoted
by $\theta_{m},m\in\{1,\dots,M\}$. The target response matrix $\boldsymbol{G}\in\mathbb{C}^{N_{r}\times N_{t}}$
is
\begin{equation}
\boldsymbol{G}=\sum_{m=1}^{M}\beta_{m}\boldsymbol{a}_{r}^{c}(\theta_{m})\boldsymbol{a}_{t}^{H}(\theta_{m}),
\end{equation}
where $\{\beta_{m}\}$ denote complex amplitudes proportional to the
radar cross sections (RCSs) of scatterers, and $\boldsymbol{a}_{r}(\theta)$
and $\boldsymbol{a}_{t}(\theta)$ denote the receive and transmit
steering vectors with angle $\theta$, respectively. Let $\boldsymbol{X}=[\boldsymbol{x}(1),\dots,\boldsymbol{x}(L)]$
denote the transmitted signal over the $L$ symbols. By assuming that
$L$ is fixed and sufficiently large, the sample coherence matrix
of $\boldsymbol{X}$ can be approximated as the covariance matrix
$\boldsymbol{S}_{x}$\footnote{The approximation of the sample coherence matrix $\frac{1}{L}\boldsymbol{X}\boldsymbol{X}^{H}$
as the covariance matrix $\boldsymbol{S}_{x}$ has been widely adopted
in the ISAC literature (see, e.g., \citep{liu2021cramer,LiuHuangNirJ20}).
Such approximation has been shown to be sufficiently accurate in \citep{StoPETLiJ07}
when the sample length is $L=256$, and it is expected to be more
accurate when $L$ becomes larger. }, i.e.,
\begin{equation}
\frac{1}{L}\boldsymbol{X}\boldsymbol{X}^{H}\approx\boldsymbol{S}_{x}.\label{eqn:app}
\end{equation}
In this case, the received signal $\boldsymbol{Y}\in\mathbb{C}^{N_{r}\times L}$
at the BS over the $L$ symbols is\citep{StoPETLiJ07}
\begin{equation}
\boldsymbol{Y}=\boldsymbol{G}\boldsymbol{X}+\boldsymbol{Z},\label{eq:received signal}
\end{equation}
where $\boldsymbol{Z}\in\mathbb{C}^{N_{r}\times L}$ denotes the noise
term, each element of which is an independent CSCG random variable
with zero mean and variance $\sigma_{r}^{2}$. For the general extended
target, we choose the target response matrix $\boldsymbol{G}\in\mathbb{C}^{N_{r}\times N_{t}}$
as the parameter to be estimated. To obtain the CRB for estimating
$\boldsymbol{G}$, we define $\tilde{\boldsymbol{g}}=\mathrm{vec}(\boldsymbol{G})\in\mathbb{C}^{N_{r}N_{t}\times1}$,
and accordingly express the signal model in \eqref{eq:received signal}
as the following complex classical linear model \citep{kay1993fundamentals}:
\begin{equation}
\tilde{\boldsymbol{y}}=(\boldsymbol{X}^{T}\otimes\boldsymbol{I}_{N_{r}})\tilde{\boldsymbol{g}}+\tilde{\boldsymbol{z}},
\end{equation}
where $\tilde{\boldsymbol{y}}=\mathrm{vec}(\boldsymbol{Y})\in\mathbb{C}^{N_{r}L\times1}$,
and $\tilde{\boldsymbol{z}}=\mathrm{vec}(\boldsymbol{Z})\in\mathbb{C}^{N_{r}L\times1}$.
Hence, the received signal vector is a complex Gaussian random vector,
i.e., $\tilde{\boldsymbol{y}}\sim\mathcal{CN}\big((\boldsymbol{X}^{T}\otimes\boldsymbol{I})\tilde{\boldsymbol{g}},\sigma_{r}^{2}\boldsymbol{I}\big)$.
It has been established in \citep{kay1993fundamentals} that the \ac{crb}
matrix for estimating $\tilde{\boldsymbol{g}}$ is
\begin{align}
\boldsymbol{C}=~ & \big((\boldsymbol{X}^{T}\otimes\boldsymbol{I}_{N_{r}})^{H}(\sigma_{r}^{2}\boldsymbol{I})^{-1}(\boldsymbol{X}^{T}\otimes\boldsymbol{I}_{N_{r}})\big)^{-1}\nonumber \\
=~ & \sigma_{r}^{2}\big(\boldsymbol{X}^{c}\boldsymbol{X}^{T}\otimes\boldsymbol{I}_{N_{r}}\big)^{-1}\overset{(\mathrm{a})}{=}\frac{\sigma_{r}^{2}}{L}\big(\boldsymbol{S}_{x}^{T}\otimes\boldsymbol{I}_{N_{r}}\big)^{-1},
\end{align}
where (a) follows from \eqref{eqn:app}. Based on the CRB matrix $\boldsymbol{C}$
, we use trace of the \ac{crb} matrix as the performance metric
for the estimation of $\boldsymbol{G}$ \citep{liu2021cramer}, i.e.,
\begin{equation}
\mathrm{CRB}(\boldsymbol{S}_{x})=\frac{N_{r}\sigma_{r}^{2}}{L}\mathrm{tr}(\boldsymbol{S}_{x}^{-1}).\label{eq:CRB}
\end{equation}
It is assumed that $\{\boldsymbol{h}_{k}\}$ are perfectly known by
the BS. Our objective is to optimize the transmit covariance $\boldsymbol{S}_{x}$
to balance the tradeoff between the achievable rate $R(\boldsymbol{S}_{x})$
in (\ref{eq:rate}) and the CRB $\mathrm{CRB}(\boldsymbol{S}_{x})$
in (\ref{eq:CRB}).

\section{C-R region Characterization}

This section defines the C-R region of the multicast channel with
ISAC, and then characterizes the Pareto boundary of this region that
achieves the optimal C-R tradeoff.

\subsection{C-R Region}

To start with, we define the C-R region, which corresponds to the
set of all rate and \ac{crb} pairs that can be simultaneously achieved
by this system under the maximum power budget $P$, i.e.,
\begin{align}
\mathcal{C}(P)\overset{\triangle}{=}~
\big\{(\hat{\Gamma},\hat{R})| & \hat{\Gamma}\geq\mathrm{CRB}(\boldsymbol{S}_{x}),\hat{R}\leq R(\boldsymbol{S}_{x}),\nonumber \\
 & \mathrm{tr}(\boldsymbol{S}_{x})\leq P,\boldsymbol{S}_{x}\succeq\boldsymbol{0}\big\}.\label{eq:C-R region}
\end{align}
To optimally balance the C-R tradeoff, we characterize the whole Pareto
boundary of region $\mathcal{C}(P)$ in (\ref{eq:C-R region}). Towards
this end, we first derive two boundary points corresponding to the
maximum rate and the minimum CRB, respectively.

First, we consider the rate maximization problem of the multicast
channel:
\begin{eqnarray}
 & \underset{\boldsymbol{S}_{x}\succeq\boldsymbol{0}}{\max} & \underset{k\in\mathcal{K}}{\min}~\log_{2}\Big(1+\frac{\boldsymbol{h}_{k}^{H}\boldsymbol{S}_{x}\boldsymbol{h}_{k}}{\sigma^{2}}\Big)\nonumber \\
 & \mathrm{s.t.} & \mathrm{tr}(\boldsymbol{S}_{x})\leq P.\label{eq:capacity problem}
\end{eqnarray}
It has been shown in \citep{jindal2006capacity} that problem (\ref{eq:capacity problem})
is optimally solvable via the technique of semidefinite programming
(SDP). Let $\boldsymbol{S}_{x}^{\mathrm{com}}$ denote the optimal
solution to problem (\ref{eq:capacity problem}), for which the maximum
achievable rate is $R_{\mathrm{max}}=R(\boldsymbol{S}_{x}^{\mathrm{com}})$.
Accordingly, the achievable CRB is $\mathrm{CRB_{com}}=\frac{N_{r}\sigma_{r}^{2}}{L}\mathrm{tr}({\boldsymbol{S}_{x}^{\mathrm{com}}}^{-1})$.
Notice that if $\boldsymbol{S}_{x}^{\mathrm{com}}$ is rank deficient,
we have $\mathrm{CRB_{com}}\rightarrow\infty$, which means the transmit
degrees of freedom (DoF) are not sufficient to estimate the target
response matrix $\boldsymbol{G}$. By contrast, if $\boldsymbol{S}_{x}^{\mathrm{com}}$
is full rank, then we can obtain a finite $\mathrm{CRB_{com}}$. The
corresponding rate-maximization boundary point is obtained as $(R_{\mathrm{max}},\mathrm{CRB_{com}})$.

Next, we consider the CRB minimization problem:
\begin{eqnarray}
 & \underset{\boldsymbol{S}_{x}\succeq\boldsymbol{0}}{\min} & \frac{N_{r}\sigma_{r}^{2}}{L}\mathrm{tr}(\boldsymbol{S}_{x}^{-1})\nonumber \\
 & \mathrm{s.t.} & \mathrm{tr}(\boldsymbol{S}_{x})\leq P.\label{eq:minimum crb}
\end{eqnarray}
 It has been shown in \citep{liu2021cramer} that the optimal solution
to (\ref{eq:minimum crb}) is $\boldsymbol{S}_{x}^{\mathrm{sen}}=\frac{P}{N_{t}}\boldsymbol{I}$,
i.e., isotropic transmission is optimal. Accordingly, the minimum
\ac{crb} is obtained as $\mathrm{CRB_{min}}=\frac{N_{r}N_{t}^{2}\sigma_{r}^{2}}{LP}$,
and the achievable multicast rate is given as $R_{\mathrm{sen}}=\log_{2}(1+P/N_{t}\underset{k\in\mathcal{K}}{\min}\|\boldsymbol{h}_{k}\|^{2}/\sigma^{2})$.
The corresponding CRB-minimization boundary point is obtained as $(R_{\mathrm{sen}},\mathrm{CRB_{min}})$.

After finding the rate-maximization and the CRB-minimization boundary
points $(R_{\mathrm{max}},\mathrm{CRB_{com}})$ and $(R_{\mathrm{sen}},\mathrm{CRB_{min}})$,
it only remains to obtain the remaining boundary points between them
for characterizing the whole Pareto boundary of the C-R region. Towards
this end, we formulate the following \ac{crb} constrained rate
maximization problem (P1): \vspace{-0.5cm}
\begin{subequations}
\begin{eqnarray}
(\mathrm{P1}): & \underset{\boldsymbol{S}_{x}\succeq\boldsymbol{0}}{\max} & \underset{k\in\mathcal{K}}{\min}~\log_{2}\Big(1+\frac{\boldsymbol{h}_{k}^{H}\boldsymbol{S}_{x}\boldsymbol{h}_{k}}{\sigma^{2}}\Big)\nonumber \\
 & \mathrm{s.t.} & \frac{N_{r}\sigma_{r}^{2}}{L}\mathrm{tr}(\boldsymbol{S}_{x}^{-1})\leq\bar{\Gamma}\label{eq:CRB constraint}\\
 &  & \mathrm{tr}(\boldsymbol{S}_{x})\leq P,\label{eq:Power constraint}
\end{eqnarray}
\end{subequations}
 where $\bar{\Gamma}$ denotes the maximum \ac{crb} threshold that
is set between $\mathrm{CRB_{min}}$ and $\mathrm{CRB_{com}}$. Suppose
that $\bar{R}$ corresponds the optimal objective value of problem
(P1) under a given $\bar{\Gamma}$, and then $(\bar{\Gamma},\bar{R})$
corresponds to one Pareto boundary point. By exhausting $\bar{\Gamma}$
between $\mathrm{CRB_{min}}$ and $\mathrm{CRB_{com}}$, we can obtain
the whole boundary of the C-R region. Notice that problem (P1) is
a convex optimization problem. In the following subsection, we derive
its semi-closed-form solution by using the Lagrange duality method
\citep{boyd2004convex}.

\subsection{Optimal Semi-Closed-Form Solution to Problem (P1)}

To solve problem (P1), we introduce an auxiliary variable $t$ and
define $\Gamma=\frac{\bar{\Gamma}L}{N\sigma_{r}^{2}}$. Accordingly,
problem (P1) is reformulated as 
\begin{subequations}
\begin{eqnarray}
(\mathrm{P1.1}): & \underset{\boldsymbol{S}_{x}\succeq\boldsymbol{0},t}{\min} & -t\nonumber \\
 & \mathrm{s.t.} & \boldsymbol{h}_{k}^{H}\boldsymbol{S}_{x}\boldsymbol{h}_{k}\geq t,\forall k\in\mathcal{K}\label{eq:SNR constraint}\\
 &  & \mathrm{tr}(\boldsymbol{S}_{x}^{-1})\leq\Gamma\label{eq:crb constraint}\\
 &  & \mathrm{tr}(\boldsymbol{S}_{x})\leq P.\label{eq:Power constraint2}
\end{eqnarray}
\end{subequations}
 As the objective function of (P1.1) is convex and all the constraints
are convex, problem (P1.1) is convex. Furthermore, it is easy to show
that (P1.1) satisfies the Slater's conditions \citep{boyd2004convex}.
As a result, the strong duality holds between problem (P1.1) and its
dual problem. Therefore, we use the Lagrange duality method to find
the optimal solution to (P1.1) in well structures. Let $\{\mu_{k}\geq0\}$,
$\lambda_{1}\ge0$, and $\lambda_{2}\ge0$ denote the dual variables
associated with the constraints (\ref{eq:SNR constraint}), (\ref{eq:crb constraint}),
and (\ref{eq:Power constraint2}), respectively. Then, the Lagrangian
of problem (P1.1) is
\begin{align}
 & \mathcal{L}(\boldsymbol{S}_{x},t,\lambda_{1},\lambda_{2},\{\mu_{k}\})\nonumber \\
=~ & t(\sum_{k=1}^{K}\mu_{k}-1)+\mathrm{tr}\big((\lambda_{2}\boldsymbol{I}-\sum_{k=1}^{K}\mu_{k}\boldsymbol{h}_{k}\boldsymbol{h}_{k}^{H})\boldsymbol{S}_{x}\big)\nonumber \\
 & +\lambda_{1}\mathrm{tr}(\boldsymbol{S}_{x}^{-1})-\lambda_{1}\Gamma-\lambda_{2}P.\label{eq:Rewrite dual function}
\end{align}
Accordingly, the dual function is given by
\begin{eqnarray}
g\big(\lambda_{1},\lambda_{2},\{\mu_{k}\}\big)= & \underset{\boldsymbol{S}_{x}\succeq\boldsymbol{0},t}{\min} & \mathcal{L}(\boldsymbol{S}_{x},t,\lambda_{1},\lambda_{2},\{\mu_{k}\}).\vspace{-0.5cm}\label{eq:dual function-1}
\end{eqnarray}
In order for $g\big(\lambda_{1},\lambda_{2},\{\mu_{k}\}\big)$ to
be bounded from below, it must hold that $\sum_{k=1}^{K}\mu_{k}=1$
and $\boldsymbol{A}(\lambda_{2},\{\mu_{k}\})\overset{\triangle}{=}\lambda_{2}\boldsymbol{I}-\sum_{k=1}^{K}\mu_{k}\boldsymbol{h}_{k}\boldsymbol{h}_{k}^{H}\succeq\boldsymbol{0}$.
Therefore, the dual problem of (P1.1) is given as 
\begin{subequations}
\begin{align}
\textrm{(D1.1)}:\underset{\lambda_{1},\lambda_{2},\{\mu_{k}\}}{\max} & g\big(\lambda_{1},\lambda_{2},\{\mu_{k}\}\big)\nonumber \\
\mathrm{s.t.} & \sum_{k=1}^{K}\mu_{k}=1\label{eq:sum mu}\\
 & \boldsymbol{A}(\lambda_{2},\{\mu_{k}\})\succeq\boldsymbol{0}\label{eq:Amatrix}\\
 & \lambda_{1}\geq0,\lambda_{2}\geq0,\mu_{k}\geq0,\forall k\in\mathcal{K}.\label{eq:nonnegative}
\end{align}
\end{subequations}
 For notational convenience, let $\mathcal{D}$ denote the feasible
region of $\lambda_{1}$, $\lambda_{2}$, and $\{\mu_{k}\}$ characterized
by (\ref{eq:sum mu}), (\ref{eq:Amatrix}), and (\ref{eq:nonnegative}).
Let $\boldsymbol{S}_{x}^{*}$, and $t^{*}$ denote the optimal solution
to problem \eqref{eq:dual function-1} with given $(\lambda_{1},\lambda_{2},\{\mu_{k}\})\in\mathcal{D}$.
Furthermore, let $\boldsymbol{S}_{x}^{\star}$ and $t^{\star}$ denote
the optimal primal solution to problem (P1.1), and $\lambda_{1}^{\star}$,
$\lambda_{2}^{\star},$ and $\{\mu_{k}^{\star}\}$ denote the optimal
dual variables to problem (D1.1).

As the strong duality holds between problem (P1.1) and its dual problem
(D1.1), problem (P1.1) can be solved by equivalently solving the dual
problem (D1.1). In the following, we first solve problem \eqref{eq:dual function-1}
with given $(\lambda_{1},\lambda_{2},\{\mu_{k}\})\in\mathcal{D}$,
then find $\lambda_{1}^{\star}$, $\lambda_{2}^{\star},$ and $\{\mu_{k}^{\star}\}$
for problem (D1.1), and finally obtain the optimal primary solution
$\boldsymbol{S}_{x}^{\star}$ and $t^{\star}$ to problem (P1.1).

\subsubsection{Optimal Solution to (\ref{eq:dual function-1}) to Obtain $g\big(\lambda_{1},\lambda_{2},\{\mu_{k}\}\big)$}

First, we evaluate the dual function $g\big(\lambda_{1},\lambda_{2},\{\mu_{k}\}\big)$
with any given $(\lambda_{1},\lambda_{2},\{\mu_{k}\})\in\mathcal{D}$.
To this end, we suppose that $\mathrm{rank}(\boldsymbol{A}(\lambda_{2},\{\mu_{k}\}))=N\le N_{t}$.
Accordingly, we express the \ac{evd} of $\boldsymbol{A}(\lambda_{2},\{\mu_{k}\})$
as $\boldsymbol{A}(\lambda_{2},\{\mu_{k}\})=\boldsymbol{U}\boldsymbol{\Lambda}\boldsymbol{U}^{H}$,
where $\boldsymbol{U}\boldsymbol{U}^{H}=\boldsymbol{U}^{H}\boldsymbol{U}=\boldsymbol{I}$,
and $\boldsymbol{\Lambda}=\mathrm{diag}(\alpha_{1},\dots,\alpha_{N_{t}})$
with $\alpha_{1}\ge\dots\ge\alpha_{N_{t}}$ being the eigenvalues
of $\boldsymbol{A}(\lambda_{2},\{\mu_{k}\})$. We then have the following
Lemma. 
\begin{lem}
\label{lemma1} \textup{\label{lem:1}For any given $(\lambda_{1},\lambda_{2},\{\mu_{k}\})\in\mathcal{D}$,
we obtain the optimal solution $\boldsymbol{S}_{x}^{*}$ to problem
(\ref{eq:dual function-1}) by considering the following two cases:}
\end{lem}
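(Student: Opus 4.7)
The plan is to reduce the unconstrained Lagrangian minimization in \eqref{eq:dual function-1} to a tractable matrix program and then diagonalize it in the eigenbasis of $\boldsymbol{A}(\lambda_{2},\{\mu_{k}\})$. Since $\mathcal{L}$ is affine in $t$ with coefficient $\sum_{k}\mu_{k}-1$, the inner minimum over $t\in\mathbb{R}$ is $-\infty$ unless $\sum_{k}\mu_{k}=1$, which is already imposed inside $\mathcal{D}$. Stripping out the constants $-\lambda_{1}\Gamma-\lambda_{2}P$, the residual problem is $\min_{\boldsymbol{S}_{x}\succeq\boldsymbol{0}}\mathrm{tr}(\boldsymbol{A}\boldsymbol{S}_{x})+\lambda_{1}\mathrm{tr}(\boldsymbol{S}_{x}^{-1})$, where $\boldsymbol{A}\succeq\boldsymbol{0}$ is already enforced by \eqref{eq:Amatrix}.

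Next, I would substitute $\tilde{\boldsymbol{S}}=\boldsymbol{U}^{H}\boldsymbol{S}_{x}\boldsymbol{U}$, which preserves the PSD cone as well as both trace terms. The first term collapses to $\sum_{i}\alpha_{i}\tilde{S}_{ii}$, depending only on the diagonal of $\tilde{\boldsymbol{S}}$. For the inverse term I would invoke the Fischer-type inequality $\mathrm{tr}(\boldsymbol{M}^{-1})\geq\sum_{i}1/M_{ii}$ for $\boldsymbol{M}\succ\boldsymbol{0}$, with equality iff $\boldsymbol{M}$ is diagonal. This pins the optimal $\tilde{\boldsymbol{S}}$ to be diagonal and decouples the problem into $N_{t}$ scalar subproblems $\min_{s_{i}>0}\alpha_{i}s_{i}+\lambda_{1}/s_{i}$.

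The two cases announced by the lemma then arise from whether $\boldsymbol{A}$ is full rank. In the full-rank case $N=N_{t}$ (all $\alpha_{i}>0$), an AM--GM argument on each scalar subproblem gives $s_{i}^{*}=\sqrt{\lambda_{1}/\alpha_{i}}$, yielding the compact form $\boldsymbol{S}_{x}^{*}=\sqrt{\lambda_{1}}\,\boldsymbol{U}\boldsymbol{\Lambda}^{-1/2}\boldsymbol{U}^{H}$ and the closed expression $g=2\sqrt{\lambda_{1}}\,\mathrm{tr}(\boldsymbol{A}^{1/2})-\lambda_{1}\Gamma-\lambda_{2}P$. In the rank-deficient case $N<N_{t}$, the subproblems aligned with the null space of $\boldsymbol{A}$ are of the form $\lambda_{1}/s_{i}$ with $\alpha_{i}=0$; sending $s_{i}\to\infty$ drives $g\to-\infty$ whenever $\lambda_{1}>0$, so such dual tuples can never maximize $g$ and are pruned at the outer dual maximization step.

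The step I expect to carry the real weight is the diagonalization via the Fischer-type inequality: once that structural reduction is secured, both cases fall out by elementary scalar calculus. A subsidiary technicality is the boundary value $\lambda_{1}=0$, where the $\mathrm{tr}(\boldsymbol{S}_{x}^{-1})$ penalty drops and the interior first-order condition degenerates; this corner will need a separate limiting argument or direct inspection of the reduced program $\min_{\boldsymbol{S}_{x}\succeq\boldsymbol{0}}\mathrm{tr}(\boldsymbol{A}\boldsymbol{S}_{x})$.
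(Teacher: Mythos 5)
Your overall route is the same as the paper's: pass to the eigenbasis of $\boldsymbol{A}(\lambda_{2},\{\mu_{k}\})$ via $\tilde{\boldsymbol{S}}=\boldsymbol{U}^{H}\boldsymbol{S}_{x}\boldsymbol{U}$, invoke the inequality $\mathrm{tr}(\boldsymbol{M}^{-1})\geq\sum_{i}1/M_{ii}$ with equality iff $\boldsymbol{M}$ is diagonal to force a diagonal optimizer, decouple into the scalar problems $\min_{s_{i}>0}\alpha_{i}s_{i}+\lambda_{1}/s_{i}$, and finish by AM--GM. That is exactly the paper's argument (the cited inequality is the paper's Lemma~\ref{lemma:appendix}), and your treatment of the $\lambda_{1}=0$ corner by direct inspection of $\min_{\boldsymbol{S}_{x}\succeq\boldsymbol{0}}\mathrm{tr}(\boldsymbol{A}\boldsymbol{S}_{x})$ is also what the paper does.

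There is, however, one genuine error in your handling of the rank-deficient case with $\lambda_{1}>0$. For an index with $\alpha_{i}=0$ the scalar subproblem is $\inf_{s_{i}>0}\lambda_{1}/s_{i}$, and since $\lambda_{1}>0$ this quantity tends to $0^{+}$ as $s_{i}\to\infty$, not to $-\infty$. Hence $g$ remains finite (the infimum over those coordinates is $0$, merely not attained), and your conclusion that such dual tuples ``can never maximize $g$ and are pruned at the outer dual maximization'' is unjustified; the lemma deliberately covers this case by writing $\beta_{i}\to+\infty$ for $N<i\leq N_{t}$. Relatedly, you misidentify the lemma's dichotomy: its two cases are $\lambda_{1}=0$ versus $\lambda_{1}>0$, not full-rank versus rank-deficient $\boldsymbol{A}$ (the rank only determines which $\beta_{i}$ are finite within the second case). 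Neither issue affects your final formula for $\boldsymbol{S}_{x}^{*}$, but the limit claim as stated is false and should be corrected.
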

\begin{itemize}
\item When $\lambda_{1}=0$, any $\boldsymbol{S}_{x}^{*}$ satisfying $\boldsymbol{A}(\lambda_{2},\{\mu_{k}\})\boldsymbol{S}_{x}^{*}=\boldsymbol{0}$
(or lying in the null space of $\boldsymbol{A}(\lambda_{2},\{\mu_{k}\})$
is optimal to problem (\ref{eq:dual function-1}).\footnote{In this case, we choose $\boldsymbol{S}_{x}^{*}=\boldsymbol{0}$ to
evaluate dual function $g\big(\lambda_{1},\lambda_{2},\{\mu_{k}\}\big)$
only, though it is generally infeasible for primal problem (P1.1).} 
\item When $\lambda_{1}>0$, we have
\begin{align}
\boldsymbol{S}_{x}^{*}=\lambda_{1}^{1/2}\boldsymbol{U}{\boldsymbol{\Sigma}}\boldsymbol{U}^{H},\label{eqn:opt:Sx}
\end{align}
where ${\boldsymbol{\Sigma}}=\mathrm{diag}(\beta_{1},\dots,\beta_{N_{t}})$
with $\beta_{i}=\alpha_{i}^{-1/2}$, for $i\leq N$ and $\beta_{i}\rightarrow+\infty$,
for $N<i\leq N_{t}$. 
\end{itemize}
\begin{proof} Please refer to Appendix A. \end{proof}

Based on Lemma \ref{lemma1}, dual function $g\big(\lambda_{1},\lambda_{2},\{\mu_{k}\}\big)$
is obtained.

\subsubsection{Optimal Solution to Dual Problem (D1.1)}

Next, we find the optimal dual variables $\lambda_{1}^{\star}$, $\lambda_{2}^{\star},$
and $\{\mu_{k}^{\star}\}$ for solving the dual problem (D1.1). As
(D1.1) is always convex but non-differentiable in general, we use
the subgradient based methods such as the ellipsoid method to find
its optimal solution \citep{boyd2014ellipsoid}. The basic idea of
the ellipsoid method is to first generate a ellipsoid containing $\lambda_{1}^{\star}$,
$\lambda_{2}^{\star},$ and $\{\mu_{k}^{\star}\}$, and then iteratively
construct new ellipsoids containing these variables but with reduced
volumes, until convergence \citep{boyd2014ellipsoid}. To successfully
implement the ellipsoid method, we only need to find the subgradients
of the objective and constraint functions in (D1.1). For notational
convenience, we define $\boldsymbol{H}_{k}=\boldsymbol{h}_{k}\boldsymbol{h}_{k}^{H},\forall k\in\mathcal{K}$. 

First, we remove the equality constraint (\ref{eq:sum mu}) by substituting
$\mu_{K}$ as $\mu_{K}=1-\sum_{k=1}^{K-1}\mu_{k}$ in problem (D1.1).
Next, we consider the objective function, one subgradient of which
at any given $[\mu_{1},\dots,\mu_{K-1},\lambda_{1},\lambda_{2}]^{T}\in\mathbb{C}^{(K+1)\times1}$
is $[\mathrm{tr}\big((\boldsymbol{H}_{1}-\boldsymbol{H}_{K})\boldsymbol{S}_{x}^{*}\big),\dots,\mathrm{tr}\big((\boldsymbol{H}_{K-1}-\boldsymbol{H}_{K})\boldsymbol{S}_{x}^{*}\big),\Gamma-\mathrm{tr}(\boldsymbol{S}_{x}^{*-1}),P-\mathrm{tr}(\boldsymbol{S}_{x}^{*})]^{T}$
\citep{mohseni2006optimized}.

Furthermore, we consider the constraints in \eqref{eq:nonnegative}.
Let $\boldsymbol{e}_{i}\in\mathbb{C}^{K+1}$ denote the vector with
all zero entries except the $i$-th entry being 1. Then, the subgradient
for constraint $\mu_{k}\geq0$ is $-\boldsymbol{e}_{k},k=1,\dots,K-1$.
The subgradients for $\lambda_{1}\geq0$ and $\lambda_{2}\geq0$ are
$-\boldsymbol{e}_{K}$ and $-\boldsymbol{e}_{K+1}$, respectively.
In addition, constraint $\mu_{K}\geq0$ is equivalent to $1-\sum_{k=1}^{K-1}\mu_{k}\geq0$,
whose subgradient is $[\underset{K-1}{\underbrace{1,\dots,1}},0,0]$.

Finally, we consider constraint $\boldsymbol{A}(\lambda_{2},\{\mu_{k}\})\succeq\boldsymbol{0}$
in (\ref{eq:Amatrix}), whose subgradient is given in the following
lemma. 
\begin{lem}
\textup{Let $\boldsymbol{v}\in\mathbb{C}^{N\times1}$ denote the eigenvector
of $\boldsymbol{A}(\lambda_{2},\{\mu_{k}\})$ corresponding to the
smallest eigenvalue. The subgradient of }$\boldsymbol{A}(\lambda_{2},\{\mu_{k}\})\succeq\boldsymbol{0}$\textup{
in (\ref{eq:Amatrix}) at the given }$\big(\lambda_{1},\lambda_{2},\{\mu_{k}\}\big)\in\mathcal{D}$
\textup{is
\[
\begin{array}{c}
[\boldsymbol{v}^{H}(\boldsymbol{H}_{1}-\boldsymbol{H}_{K})\boldsymbol{v},\dots,\boldsymbol{v}^{H}(\boldsymbol{H}_{K-1}-\boldsymbol{H}_{K})\boldsymbol{v},0,-1]^{T}.\end{array}
\]
} 
\end{lem}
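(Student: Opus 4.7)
\medskip

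\noindent\textbf{Proof proposal.} The plan is to rewrite the matrix inequality as a single scalar convex constraint, express it as a pointwise maximum of linear functions, and then read off a subgradient via the Danskin-type argument.

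First, since $\boldsymbol{A}(\lambda_{2},\{\mu_{k}\})\succeq\boldsymbol{0}$ is equivalent to $\lambda_{\min}(\boldsymbol{A}(\lambda_{2},\{\mu_{k}\}))\geq0$, I would rephrase the constraint as
\begin{equation}
f(\mu_{1},\dots,\mu_{K-1},\lambda_{1},\lambda_{2})\overset{\triangle}{=}-\lambda_{\min}\!\big(\boldsymbol{A}(\lambda_{2},\{\mu_{k}\})\big)\leq0,\nonumber
\end{equation}
where $\mu_{K}=1-\sum_{k=1}^{K-1}\mu_{k}$ has already been substituted (as done earlier in the excerpt). Using the Rayleigh-quotient representation of the smallest eigenvalue, I would write
\begin{equation}
f=\sup_{\|\boldsymbol{v}\|=1}\big(-\boldsymbol{v}^{H}\boldsymbol{A}(\lambda_{2},\{\mu_{k}\})\boldsymbol{v}\big),\nonumber
\end{equation}
which exhibits $f$ as a pointwise supremum of functions affine in $(\mu_{1},\dots,\mu_{K-1},\lambda_{1},\lambda_{2})$; this immediately confirms convexity and sets up the subgradient computation.

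Next, substituting $\boldsymbol{A}=\lambda_{2}\boldsymbol{I}-\boldsymbol{H}_{K}-\sum_{k=1}^{K-1}\mu_{k}(\boldsymbol{H}_{k}-\boldsymbol{H}_{K})$ into the inner expression gives, for any fixed unit vector $\boldsymbol{v}$,
\begin{equation}
-\boldsymbol{v}^{H}\boldsymbol{A}\boldsymbol{v}=-\lambda_{2}+\boldsymbol{v}^{H}\boldsymbol{H}_{K}\boldsymbol{v}+\sum_{k=1}^{K-1}\mu_{k}\,\boldsymbol{v}^{H}(\boldsymbol{H}_{k}-\boldsymbol{H}_{K})\boldsymbol{v},\nonumber
\end{equation}
which is affine in the dual variables with coefficients $\boldsymbol{v}^{H}(\boldsymbol{H}_{k}-\boldsymbol{H}_{K})\boldsymbol{v}$ on $\mu_{k}$, $0$ on $\lambda_{1}$, and $-1$ on $\lambda_{2}$. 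By the standard rule for subgradients of a pointwise supremum of affine functions (Danskin's theorem), any attaining $\boldsymbol{v}^{\star}$ of the supremum yields the gradient of the corresponding affine function as a subgradient of $f$. The maximizer is precisely the unit eigenvector $\boldsymbol{v}$ of $\boldsymbol{A}(\lambda_{2},\{\mu_{k}\})$ associated with its smallest eigenvalue, so the vector
\begin{equation}
\big[\boldsymbol{v}^{H}(\boldsymbol{H}_{1}-\boldsymbol{H}_{K})\boldsymbol{v},\dots,\boldsymbol{v}^{H}(\boldsymbol{H}_{K-1}-\boldsymbol{H}_{K})\boldsymbol{v},0,-1\big]^{T}\nonumber
\end{equation}
is a valid subgradient. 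This matches the claimed expression.

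The only subtle point, and where I expect most of the care to be needed, is the non-uniqueness of $\boldsymbol{v}$ when the smallest eigenvalue is degenerate: in that case the subdifferential of $f$ is the convex hull of the corresponding affine gradients over all unit eigenvectors in the bottom eigenspace, and the statement is a legitimate choice of an element of this subdifferential (which is all the ellipsoid method requires). I would briefly remark on this, and also verify the sign convention, namely that a subgradient of the convex constraint function $f$ (rather than of the convex cone indicator) is what the cutting-plane update in the ellipsoid method consumes.
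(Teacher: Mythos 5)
Your proposal is correct and follows essentially the same route as the paper: the paper's one-line proof simply notes that $\boldsymbol{A}(\lambda_{2},\{\mu_{k}\})\succeq\boldsymbol{0}$ is equivalent to $\boldsymbol{v}^{H}\boldsymbol{A}(\lambda_{2},\{\mu_{k}\})\boldsymbol{v}\geq0$ for the minimizing eigenvector $\boldsymbol{v}$, which is exactly the Rayleigh-quotient/Danskin argument you spell out in detail. Your added remarks on the degenerate-eigenvalue case and the sign convention are careful refinements of the same idea, not a different approach.
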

\begin{proof} This lemma follows immediately by noting the fact that
$\boldsymbol{A}(\lambda_{2},\{\mu_{k}\})\succeq\boldsymbol{0}\Longleftrightarrow\boldsymbol{v}^{H}\boldsymbol{A}(\lambda_{2},\{\mu_{k}\})\boldsymbol{v}\geq0.$\end{proof} 

So far, the subgradients of the objective function and all constraints
have been obtained. As a result, we can efficiently obtain the optimal
dual solution $\lambda_{1}^{\star}$, $\lambda_{2}^{\star}$, and
$\{\mu_{k}^{\star}\}$ via the ellipsoid method.

\subsubsection{Semi-Closed-Form Solution to Primal Problem (P1.1) or (P1)}

Finally, with $\lambda_{1}^{\star}$, $\lambda_{2}^{\star}$, and
$\{\mu_{k}^{\star}\}$ at hand, we derive the optimal primal solution
$\boldsymbol{S}_{x}^{\star}$ and $t^{\star}$ to problem (P1.1).

Notice that we consider $\bar{\Gamma}$ between $\mathrm{CRB_{min}}\textrm{ and }\mathrm{CRB_{com}}$.
In this case, it can be shown that the optimal dual variables must
satisfy that $\lambda_{1}^{\star}>0$ and $\boldsymbol{A}(\lambda_{2}^{\star},\{\mu_{k}^{\star}\})$
is of full rank (i.e., $\mathrm{rank}(\boldsymbol{A}(\lambda_{2}^{\star},\{\mu_{k}^{\star}\}))=N_{t}$),
since otherwise, the maximum CRB constraint in \eqref{eq:crb constraint}
or the maximum transmit power constraint in (\ref{eq:Power constraint2})
cannot be satisfied. In this case, denote the \ac{evd} of $\boldsymbol{A}(\lambda_{2}^{\star},\{\mu_{k}^{\star}\})$
as $\boldsymbol{A}(\lambda_{2}^{\star},\{\mu_{k}^{\star}\})=\boldsymbol{U}^{\star}\boldsymbol{\Lambda}^{\star}\boldsymbol{U}^{\star H}$,
where $\boldsymbol{\Lambda}^{\star}=\mathrm{diag}(\alpha_{1}^{\star},\dots,\alpha_{N_{t}}^{\star})$
and $\alpha_{1}^{\star}\ge\dots\ge\alpha_{N_{t}}^{\star}>0$ are the
eigenvalues of $\boldsymbol{A}(\lambda_{2}^{\star},\{\mu_{k}^{\star}\})$.
Then the following proposition follows directly from Lemma \ref{lem:1},
for which the proof is skipped for brevity.
\begin{prop}
\textup{\label{thm:1}The optimal} \textup{primal solution $\boldsymbol{S}_{x}^{\star}$
and $t^{\star}$ to problem (P1.1) is given by}
\begin{align}
\boldsymbol{S}_{x}^{\star}= & \lambda_{1}^{\star1/2}\boldsymbol{U}^{\star}{\boldsymbol{\Sigma}^{\star}}\boldsymbol{U}^{\star H},\\
t^{\star}= & \underset{k\in\mathcal{K}}{\min}\ \boldsymbol{h}_{k}^{H}\boldsymbol{S}_{x}^{\star}\boldsymbol{h}_{k},\vspace{-0.6cm}
\end{align}
 \textup{where ${\boldsymbol{\Sigma}^{\star}}=\mathrm{diag}(\beta_{1}^{\star},\dots,\beta_{N_{t}}^{\star}$)
and $\beta_{i}^{\star}=\alpha_{i}^{\star-1/2}$, $\forall i\in\{1,\dots,N_{t}\}$.} 
\end{prop}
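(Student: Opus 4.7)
The plan is to obtain Proposition~\ref{thm:1} as an essentially immediate consequence of Lemma~\ref{lem:1} combined with strong duality, by first ruling out the degenerate case $\lambda_1^{\star}=0$ (equivalently, the rank-deficient case for $\boldsymbol{A}(\lambda_2^{\star},\{\mu_k^{\star}\})$) and then recovering $t^{\star}$ via complementary slackness. Since problem (P1.1) is convex, satisfies Slater's condition, and (D1.1) was shown to be its dual, any primal-dual optimum $(\boldsymbol{S}_x^{\star},t^{\star};\lambda_1^{\star},\lambda_2^{\star},\{\mu_k^{\star}\})$ must jointly satisfy the KKT conditions. In particular, $\boldsymbol{S}_x^{\star}$ has to minimize the Lagrangian in \eqref{eq:Rewrite dual function} at the dual optimum, so it must be an instance of the minimizer $\boldsymbol{S}_x^{*}$ characterized in Lemma~\ref{lem:1}.

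The main obstacle is justifying that at optimum $\lambda_1^{\star}>0$ and $\boldsymbol{A}(\lambda_2^{\star},\{\mu_k^{\star}\})$ is full rank, since only then does the clean formula \eqref{eqn:opt:Sx} apply. Here I would argue by contradiction using the hypothesis $\bar{\Gamma}\in(\mathrm{CRB_{min}},\mathrm{CRB_{com}})$. If $\lambda_1^{\star}=0$, Lemma~\ref{lem:1} case~1 forces $\boldsymbol{A}(\lambda_2^{\star},\{\mu_k^{\star}\})\boldsymbol{S}_x^{\star}=\boldsymbol{0}$, so $\boldsymbol{S}_x^{\star}$ must lie in the null space of $\boldsymbol{A}$; but the CRB constraint $\mathrm{tr}(\boldsymbol{S}_x^{-1})\le\Gamma$ requires $\boldsymbol{S}_x^{\star}$ to be positive definite, which forces $\boldsymbol{A}(\lambda_2^{\star},\{\mu_k^{\star\star}\})=\boldsymbol{0}$. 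This in turn would make the Lagrangian linear in $\boldsymbol{S}_x$ and collapse the CRB constraint to slackness, contradicting $\bar{\Gamma}<\mathrm{CRB_{com}}$ (i.e., the CRB bound must be active and the transmit covariance cannot behave as in the pure rate-maximization solution). Conversely, if $\boldsymbol{A}(\lambda_2^{\star},\{\mu_k^{\star}\})$ were rank deficient while $\lambda_1^{\star}>0$, the expression \eqref{eqn:opt:Sx} would give unbounded eigenvalues along the null directions, violating the power constraint $\mathrm{tr}(\boldsymbol{S}_x^{\star})\le P$ and therefore preventing strong duality from holding at a finite primal optimum. Hence $\lambda_1^{\star}>0$ and $\alpha_{N_t}^{\star}>0$.

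With these facts, Lemma~\ref{lem:1} case~2 applied at the optimal dual point yields directly
\[
\boldsymbol{S}_x^{\star}=\lambda_1^{\star 1/2}\,\boldsymbol{U}^{\star}\boldsymbol{\Sigma}^{\star}\boldsymbol{U}^{\star H},\qquad \beta_i^{\star}=\alpha_i^{\star-1/2},
\]
which is the claimed covariance. It remains to identify $t^{\star}$. Recall that the Lagrangian contains the term $t(\sum_{k}\mu_k-1)$, so only the primal feasibility $\boldsymbol{h}_k^{H}\boldsymbol{S}_x^{\star}\boldsymbol{h}_k\ge t^{\star}$ for all $k$ pins down $t^{\star}$. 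Since the objective of (P1.1) is $-t$, the optimum must saturate these inequalities in the sense $t^{\star}=\min_{k\in\mathcal{K}}\boldsymbol{h}_k^{H}\boldsymbol{S}_x^{\star}\boldsymbol{h}_k$; otherwise one could increase $t$ slightly and strictly improve the objective while maintaining feasibility. Combined with complementary slackness $\mu_k^{\star}(\boldsymbol{h}_k^{H}\boldsymbol{S}_x^{\star}\boldsymbol{h}_k-t^{\star})=0$, this matches Proposition~\ref{thm:1}. Once the rank argument in the second paragraph is in place, the rest is a direct reading of Lemma~\ref{lem:1}.
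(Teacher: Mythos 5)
Your proposal is correct and follows essentially the same route as the paper, which states that the proposition ``follows directly from Lemma \ref{lem:1}'' after asserting exactly the two facts you establish---that $\lambda_{1}^{\star}>0$ and that $\boldsymbol{A}(\lambda_{2}^{\star},\{\mu_{k}^{\star}\})$ is full rank, since otherwise the CRB constraint or the power constraint could not be satisfied. Your contradiction arguments for those two facts and the recovery of $t^{\star}$ from the objective $-t$ simply supply the details the paper skips ``for brevity.''
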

As a result, problem (P1.1) or (P1) is finally solved. To gain more
insights, we have the following remark. 
\begin{rem}
With the optimal dual solution $\lambda_{2}^{\star}$ and $\{\mu_{k}^{\star}\}$,
we define the (negative) weighted communication channel of the $K$
CUs as $\boldsymbol{B}=-\sum_{k=1}^{K}\mu_{k}^{\star}\boldsymbol{h}_{k}\boldsymbol{h}_{k}^{H}\succeq\boldsymbol{0}$,
with rank $N_{\text{com}}=\text{rank}(\boldsymbol{B})\le N_{t}$.
The EVD of $\boldsymbol{B}$ is then expressed as $\boldsymbol{B}=[\overline{\boldsymbol{U}}_{\text{com}}~{\boldsymbol{U}}_{\text{com}}]{\boldsymbol{\Delta}}[\overline{\boldsymbol{U}}_{\text{com}}~{\boldsymbol{U}}_{\text{com}}]^{H}$,
where ${\boldsymbol{\Delta}}=\mathrm{diag}(0,\ldots,0,\delta_{1},\ldots,\delta_{N_{\text{com}}})$
with $0>\delta_{1}\ge\ldots\ge\delta_{N_{\text{com}}}$ denoting the
$N_{\text{com}}$ negative eigenvalues, and $\boldsymbol{U}_{\text{com}}\in\mathbb{C}^{N_{t}\times N_{\text{com}}}$
and $\overline{\boldsymbol{U}}_{\text{com}}\in\mathbb{C}^{N_{t}\times(N_{t}-N_{\text{com}})}$
consist the eigenvectors corresponding the non-zero and zero eigenvalues,
thus spanning the communication subspaces and non-communication subspaces,
respectively. In this case, recall that $\boldsymbol{A}(\lambda_{2}^{\star},\{\mu_{k}^{\star}\})=\lambda_{2}^{\star}\boldsymbol{I}-\sum_{k=1}^{K}\mu_{k}^{\star}\boldsymbol{h}_{k}\boldsymbol{h}_{k}^{H}=\lambda_{2}^{\star}\boldsymbol{I}+\boldsymbol{B}\succ\boldsymbol{0}$.
It is evident that the EVD of $\boldsymbol{A}(\lambda_{2}^{\star},\{\mu_{k}^{\star}\})$
can be expressed as
\begin{align*}
\boldsymbol{A}(\lambda_{2}^{\star},\{\mu_{k}^{\star}\})= & \boldsymbol{U}^{\star}\boldsymbol{\Lambda}^{\star}\boldsymbol{U}^{\star H}\\
= & [\overline{\boldsymbol{U}}_{\text{com}}~{\boldsymbol{U}}_{\text{com}}]\big(\lambda_{2}^{\star}\boldsymbol{I}+{\boldsymbol{\Delta}}\big)[\overline{\boldsymbol{U}}_{\text{com}}~{\boldsymbol{U}}_{\text{com}}]^{H},
\end{align*}
where $\boldsymbol{U}^{\star}=[\overline{\boldsymbol{U}}_{\text{com}}~\boldsymbol{U}_{\text{com}}]$
and $\boldsymbol{\Lambda}^{\star}=\mathrm{diag}(\alpha_{1}^{\star},\dots,\alpha_{N_{t}}^{\star})$
with
\[
\alpha_{i}^{\star}=\left\{ \begin{array}{cc}
\lambda_{2}^{\star}, & {\text{if~}}i=1,\ldots,N_{t}-N_{\text{com}}\\
\lambda_{2}^{\star}+\delta_{i-(N_{t}-N_{\text{com}})}, & {\text{if~}}i=N_{t}-N_{\text{com}}+1,\ldots,N_{t}
\end{array}\right..
\]
In this case, it follows from Proposition \ref{thm:1} that the optimal
transmit covariance is given by 
\begin{align*}
\boldsymbol{S}_{x}^{\star} & =\lambda_{1}^{\star1/2}\boldsymbol{U}{\boldsymbol{\Sigma}^{\star}}\boldsymbol{U}^{\star H}\\
 & =\lambda_{1}^{\star1/2}\overline{\boldsymbol{U}}_{\text{com}}\overline{\boldsymbol{\Sigma}}_{\text{com}}^{\star}\overline{\boldsymbol{U}}_{\text{com}}^{H}+\lambda_{1}^{\star1/2}\boldsymbol{U}_{\text{com}}{\boldsymbol{\Sigma}}_{\text{com}}^{\star}\boldsymbol{U}_{\text{com}}^{H},
\end{align*}
where $\overline{\boldsymbol{\Sigma}}_{\text{com}}^{\star}=\mathrm{diag}(\sqrt{1/\lambda_{2}},\ldots,\sqrt{1/\lambda_{2}})$
and ${\boldsymbol{\Sigma}}_{\text{com}}^{\star}=\mathrm{diag}(\sqrt{1/(\lambda_{2}+\delta_{1})},\ldots,\sqrt{1/(\lambda_{2}+\delta_{N_{\text{com}}})})$.
It is clear that the transmit covariance is decomposed into two parts,
including $\lambda_{1}^{\star1/2}\boldsymbol{U}_{\text{com}}{\boldsymbol{\Sigma}}_{\text{com}}^{\star}\boldsymbol{U}_{\text{com}}^{H}$
towards CUs for ISAC, and $\lambda_{1}^{\star1/2}\overline{\boldsymbol{U}}_{\text{com}}\overline{\boldsymbol{\Sigma}}_{\text{com}}^{\star}\overline{\boldsymbol{U}}_{\text{com}}^{H}$
for dedicated for sensing. In the first part for ISAC, more transmit
power is allocated over the subchannels with stronger combined channel
gains (or when the absolute value of negative $\delta_{i}$ becomes
large). In the second part for sensing, equal power allocation is
adopted. 
\end{rem}

\section{Joint Communication and Sensing Beamforming}

The previous section presented the optimal full-rank transmit covariance
for achieving the Perato boundary of the C-R region, in which, however,
the CU receivers may need to implement joint decoding or successive
interference cancellation (SIC) for decoding the information signals.
Alternatively, this section presents a practical joint communication
and sensing beamforming design, in which a single transmit beam is
used for delivering the common message. Let $\boldsymbol{w}\in\mathbb{C}^{N_{t}\times1}$
denote the communication vector, and $x_{\text{com}}(n)$ denote the
common message at symbol $n$ that is a CSCG random variable with
zero mean and unit variance. Besides, let $\boldsymbol{s}(n)$ denote
dedicated sensing signals to provide additional DoF for estimating
the extended target, which is a random vector with mean zero and covariance
matrix $\boldsymbol{S}_{x}$. Then the transmit signal is given by
$\boldsymbol{x}(n)=\boldsymbol{w}x_{\text{com}}(n)+\boldsymbol{s}(n)$,
for which the transmit covariance matrix is $\boldsymbol{S}_{x}=\boldsymbol{S}_{s}+\boldsymbol{w}\boldsymbol{w}^{H}$.
The transmit power constraint becomes $\mathrm{tr}(\boldsymbol{S}_{x})=\mathrm{tr}(\boldsymbol{S}_{s})+\|\boldsymbol{w}\|^{2}\le P$.

First, consider the communication. With joint transmit beamforming,
the dedicated sensing signals may introduce harmful interference at
the receiver of CUs. In this case, the \ac{sinr} at the receiver
of each CU $k\in\mathcal{K}$ is denoted as
\begin{equation}
\tilde{\gamma_{k}}=\frac{|\boldsymbol{h}_{k}^{H}\boldsymbol{w}|^{2}}{\boldsymbol{h}_{k}^{H}\boldsymbol{S}_{s}\boldsymbol{h}_{k}+\sigma^{2}},
\end{equation}
and the corresponding achievable multicast rate is
\begin{equation}
\hat{R}(\boldsymbol{w},\boldsymbol{S}_{s})\overset{\triangle}{=}\underset{k\in\mathcal{K}}{\min}~\log_{2}\Big(1+\frac{|\boldsymbol{h}_{k}^{H}\boldsymbol{w}|^{2}}{\boldsymbol{h}_{k}^{H}\boldsymbol{S}_{s}\boldsymbol{h}_{k}+\sigma^{2}}\Big).
\end{equation}

Next, consider the sensing. As both communication and sensing beams
can be employed for target estimation \citep{liu2021cramer}, the
corresponding estimation CRB is same as \eqref{eq:CRB}.

In this case, the \ac{crb}-constrained rate maximization problem
via joint communication and sensing beamforming is formulated as 
\begin{subequations}
\begin{eqnarray*}
(\mathrm{P2}): & \underset{\boldsymbol{w},\boldsymbol{S}_{s}\succeq\boldsymbol{0}}{\max} & \underset{k\in\mathcal{K}}{\min}\log_{2}\Big(1+\frac{|\boldsymbol{h}_{k}^{H}\boldsymbol{w}|^{2}}{\boldsymbol{h}_{k}^{H}\boldsymbol{S}_{s}\boldsymbol{h}_{k}+\sigma^{2}}\Big)\\
 & \mathrm{s.t.} & \mathrm{tr}(\boldsymbol{S}_{s}+\boldsymbol{w}\boldsymbol{w}^{H})\leq P\\
 &  & \frac{N_{r}\sigma_{r}^{2}}{L}\mathrm{tr}\big((\boldsymbol{S}_{s}+\boldsymbol{w}\boldsymbol{w}^{H})^{-1}\big)\leq\bar{\Gamma}.
\end{eqnarray*}
\end{subequations}
 By adopting an auxiliary variable $t$ and substituting $\boldsymbol{S}_{s}=\boldsymbol{S}_{x}-\boldsymbol{w}\boldsymbol{w}^{H}$,
problem (P2) is equivalently reformulated as\vspace{-0.2cm} 
\begin{subequations}
\begin{align}
(\textrm{P2.1}):\underset{\boldsymbol{w},\boldsymbol{S}_{x},t}{\max} & t\nonumber \\
\mathrm{s.t.} & \frac{|\boldsymbol{h}_{k}^{H}\boldsymbol{w}|^{2}}{\boldsymbol{h}_{k}^{H}\boldsymbol{S}_{x}\boldsymbol{h}_{k}-|\boldsymbol{h}_{k}^{H}\boldsymbol{w}|^{2}+\sigma^{2}}\geq t,\forall k\in\mathcal{K}\label{eq:SINR constraint}\\
 & \textrm{\ensuremath{\frac{N_{r}\sigma_{r}^{2}}{L}}\ensuremath{\mathrm{tr}}(\ensuremath{\boldsymbol{S}_{x}^{-1}})\ensuremath{\leq\bar{\Gamma}}}\label{eq:21}\\
 & \mathrm{tr}(\boldsymbol{S}_{x})\leq P\label{eq:22}\\
 & \boldsymbol{S}_{x}-\boldsymbol{w}\boldsymbol{w}^{H}\succeq\boldsymbol{0}.\label{eq:23}
\end{align}
\end{subequations}

Note that problem (P2.1) is a non-convex optimization problem due
to the non-convex constraints in (\ref{eq:SINR constraint}). To deal
with this issue, we propose an efficient algorithm based on the technique
of \ac{sca} to find a high-quality suboptimal solution. The basic
idea of \ac{sca} is to iteratively approximate the non-convex optimization
problem (P2) into a series of convex problems by linearizing the non-convex
constraint functions in (\ref{eq:SINR constraint}) via the first-order
Taylor approximation, such that each approximate problem can be optimally
solved via standard convex optimization techniques.

The SCA-based algorithm is implemented in an iterative manner. Consider
one particular iteration $i\ge1$, in which the local point of $\boldsymbol{w}$
is denoted by $\boldsymbol{w}^{(i)}$. Note that in \eqref{eq:SINR constraint},
the convex quadratic function $|\boldsymbol{h}_{k}^{H}\boldsymbol{w}|^{2}$
is lower bounded by its first-order Taylor expansion at local point
$\boldsymbol{w}^{(i)}$, i.e., 
\begin{align}
|\boldsymbol{h}_{k}^{H}\boldsymbol{w}|^{2}\le2\mathrm{Re}(\boldsymbol{w}^{H}\boldsymbol{h}_{k}\boldsymbol{h}_{k}^{H}\boldsymbol{w}^{(i)})-\boldsymbol{w}^{(i)H}\boldsymbol{h}_{k}\boldsymbol{h}_{k}^{H}\boldsymbol{w}^{(i)}\triangleq\psi_{k}^{(i)}(\boldsymbol{w}),\label{eqn:lower:bound}
\end{align}
where $\psi_{k}^{(i)}(\boldsymbol{w})$ is an affine function with
respect to $\boldsymbol{w}$. By replacing $|\boldsymbol{h}_{k}^{H}\boldsymbol{w}|^{2}$
as $\psi_{k}^{(i)}(\boldsymbol{w})$, $\forall k\in\mathcal{K}$,
the constraints in (\ref{eq:SINR constraint}) is approximated as
\begin{align}
\frac{\psi_{k}^{(i)}(\boldsymbol{w})}{\boldsymbol{h}_{k}^{H}\boldsymbol{S}_{x}\boldsymbol{h}_{k}-\psi_{k}^{(i)}(\boldsymbol{w})+\sigma^{2}}\geq t,\forall k\in\mathcal{K}\label{eq:first order expansion}
\end{align}
Notice that the LHS of \eqref{eq:first order expansion} serves as
a lower bound of that of \eqref{eq:SINR constraint}, and as a result,
the feasible region of $\boldsymbol{w},\boldsymbol{S}_{x},$ and $t$
characterized by (\ref{eq:first order expansion}) is always a subset
of those by (\ref{eq:SINR constraint}). By replacing (\ref{eq:SINR constraint})
as (\ref{eq:first order expansion}), we obtain the approximate problem
at the $i$-th iteration as 
\begin{subequations}
\begin{eqnarray*}
(\textrm{P2.2.}i): & \underset{\boldsymbol{w},\boldsymbol{S}_{x},t}{\max} & t\\
 & \mathrm{s.t.} & \textrm{(\ref{eq:first order expansion}), (\ref{eq:21}), (\ref{eq:22}), and (\ref{eq:23}).}
\end{eqnarray*}
\end{subequations}
Notice that problem (P2.2.$i$) is a quasi-convex optimization problem,
which can be optimally solved by equivalently solving a series of
convex feasibility problems together with bisection search \citep{boyd2004convex}.
To implement this, we define the feasibility problem associated with
(P2.2.$i$) under given $t$ as 
\begin{subequations}
\begin{eqnarray*}
(\textrm{P2.3.\ensuremath{i}}): & \underset{\boldsymbol{w},\boldsymbol{S}_{x}}{\max} & 0\\
 & \mathrm{s.t.} & \textrm{\textrm{(\ref{eq:first order expansion}), (\ref{eq:21}), (\ref{eq:22}), and (\ref{eq:23})}},
\end{eqnarray*}
\end{subequations}
Notice that the feasibility problem (P2.3.$i$) is a convex optimization
problem, which can thus be solved optimally via standard convex solvers
such as CVX \citep{cvx}. Let the optimal solution of $t$ to problem
(P2.2.$i$) be denoted by $t^{(i)*}$. It is thus clear that if problem
(P2.3.$i$) is feasible under a given $t$, then we have $t^{(i)*}\ge t$;
otherwise, $t^{(i)*}<t$ follows. Therefore, by using bisection search
over $t$ and solving problem (P2.3.$i$) under given $t$'s. Therefore,
problem (P2.2.$i$) has been optimally solved.

Let ${\boldsymbol{w}^{(i)*}}$ denote the optimal solution of ${\boldsymbol{w}}$
to problem (P2.2.$i$), which is then updated as the local point of
${\boldsymbol{w}}$ in the next iteration $i+1$, i.e., ${\boldsymbol{w}^{(i+1)}}\gets{\boldsymbol{w}^{(i)*}}$.
It can be verified that the SCA-based iterations can lead to a monotonically
non-decreasing objective value for problem (P2.1). As a result, the
convergence of the SCA-based algorithm is ensured. After convergence,
the obtained solution to problem (P2.2.$i$) is chosen as the desired
solution, denoted by ${\boldsymbol{w}^{*}}$, ${\boldsymbol{S}^{*}}$,
and $t^{*}$. This thus complete the solution to problem (P2.1) and
thus (P2).

\section{Numerical Results}

This section provides numerical results to show the C-R regions achieved
by the optimal transmit covariance and the suboptimal joint beamforming
design. We consider the following scheme for comparison.
\begin{itemize}
\item \textbf{Isotropic transmission}: The BS sets $\boldsymbol{S}_{x}=\frac{P}{N}\boldsymbol{I}$,
which is same as $\boldsymbol{S}_{x}^{\text{sen}}$ for CRB minimization. 
\end{itemize}
In the simulation, we set the numbers of transmit and receive antennas
at the BS as $N_{t}=N_{r}=4$, the length of radar processing interval
as $L=256$, and the noise power as $\sigma_{r}^{2}=\sigma^{2}=1$.
We set the transmit power $P=0$ dB. The channel vectors of \acpl{cu}
are generated based on normalized Rayleigh fading.

Fig. \ref{fig:2} shows the C-R region in a scenario with $K=35$
CUs. In this case, $\boldsymbol{S}_{x}^{\text{com}}$ is obtained
to be full rank, and as a result, the boundary point $(\text{CRB}_{\text{com}},R_{\text{max}})$
is finite and the time sharing design is applicable. It is observed
that the C-R-region boundary by the optimal transmit covariance dominates
those by other schemes, thus validating the effectiveness of the optimization.

Fig. \ref{fig:3} shows the C-R region in a scenario with $K=3$ CUs.
In this case, $\boldsymbol{S}_{x}^{\text{com}}$ is obtained to be
rank-deficient, such that the boundary point $(\text{CRB}_{\text{com}},R_{\text{max}})$
becomes infinite. It is observed that as the CRB becomes large, the
joint beamforming achieves a C-R-region boundary close to the optimal
transmit covariance and outperforms the isotropic transmission, thus
showing the benefit of beamforming in this case.

Finally, Fig. \ref{fig:4} shows the achievable rate versus the number
$K$ of CUs, where the CRB threshold is set to be $\bar{\Gamma}=0.5$.
When $K$ is small, it is observed that the joint beamforming design
performs close to the optimal transmit covariance and significantly
outperforms the isotropic transmission. By contrast, when $K$ is
large, the isotropic transmission performs close to the optimal transmit
covariance, while the joint beamforming design is observed to lead
to nearly zero data rates. This is consistent with the observations
in the multicast channel for communication only \citep{sidiropoulos2006transmit}.

\begin{figure}[tbh]
\includegraphics[scale=0.5]{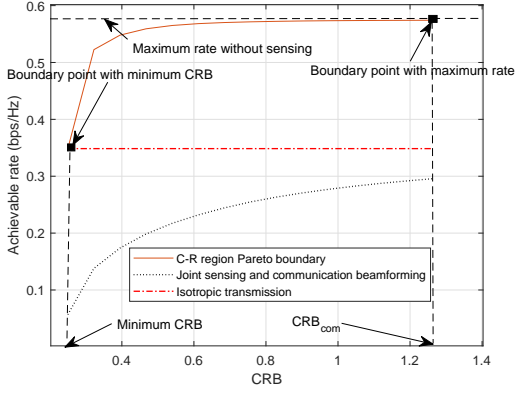}\centering\caption{\label{fig:2}The C-R region with $K=35$.}
\end{figure}

\begin{figure}[tbh]
\includegraphics[scale=0.5]{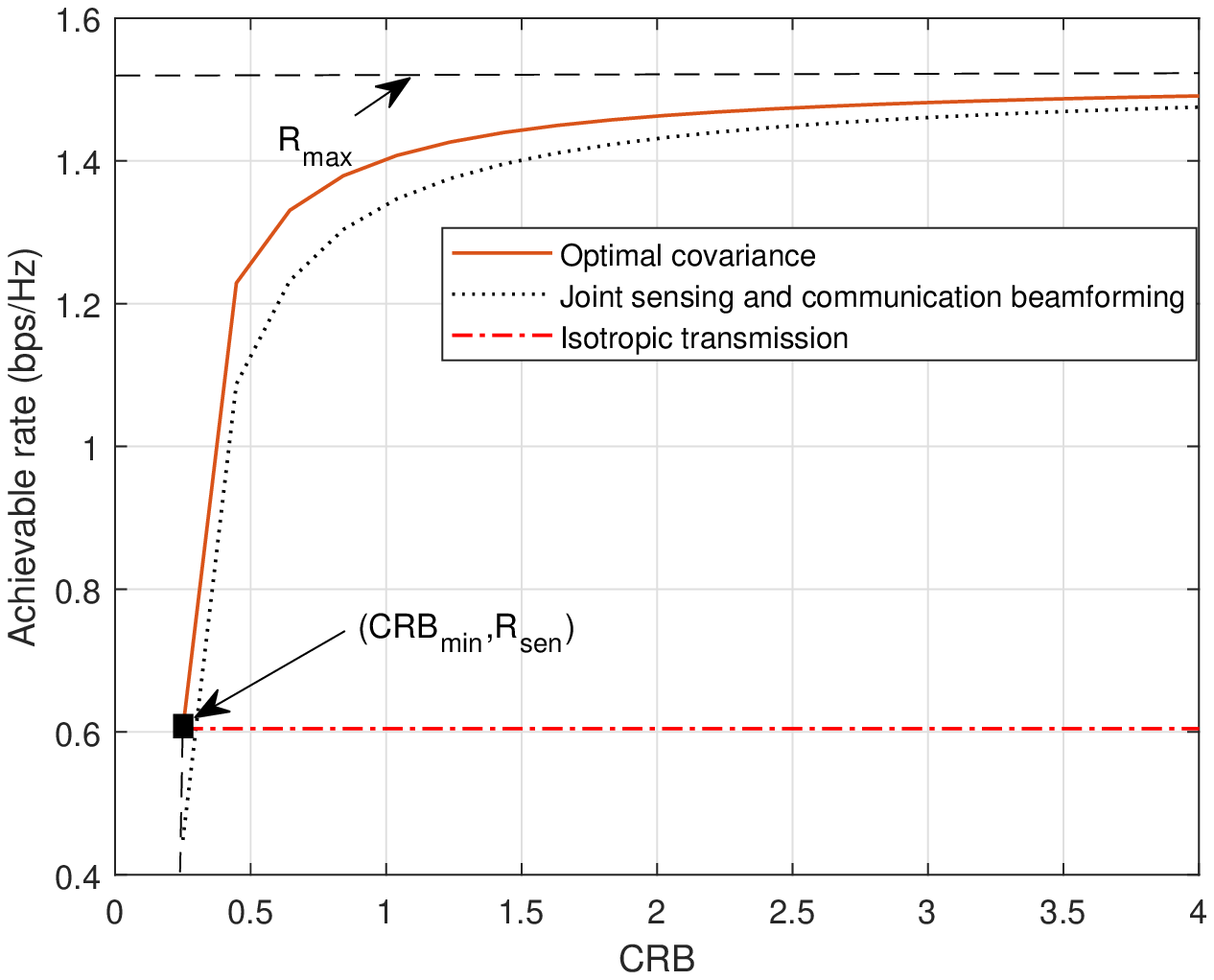}\centering\caption{\label{fig:3}The C-R region with $K=3$.}
\end{figure}

\begin{figure}[tbh]
\includegraphics[scale=0.5]{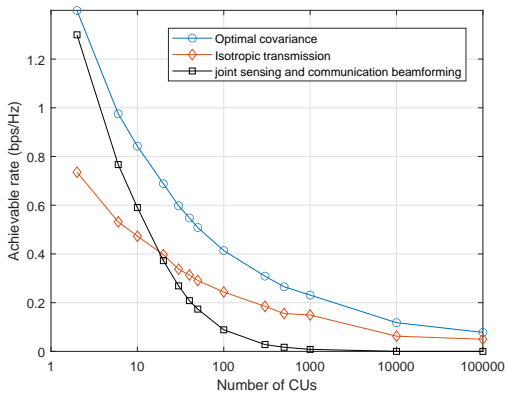}\centering\caption{\label{fig:4}The achievable rate versus the number $K$ of CUs.}
\end{figure}

\section{Conclusion}

This paper studied the fundamental CRB-rate tradeoff in a multi-antenna
multicast channel with ISAC. We characterized the Pareto boundary
of the C-R region via maximizing the communication rate subject to
a new maximum CRB constraint, for which the optimal transmit covariance
solution was presented in semi-closed form. It was shown that optimal
full-rank transmit covariance can be decomposed into two parts over
communication and sensing subchannels, respectively, with different
power allocation strategies. We also presented a practical joint communication
and sensing beamforming. Numerical results were presented to show
the C-R region under the optimal covariance and joint beamforming.
How to extend the design to the cases with multi-antenna CUs and/or
multi-group multicast channels with ISAC is interesting research directions
that are worth pursuing in future.

\appendices{}

\section{Proof of Lemma 1}

First, consider the case with $\lambda_{1}=0$, in which problem (\ref{eq:dual function-1})
is equivalent to minimizing $\mathrm{tr}(\boldsymbol{A}(\lambda_{2},\{\mu_{k}\})\boldsymbol{S}_{x})$.
Recall that $\boldsymbol{A}(\lambda_{2},\{\mu_{k}\})\succeq\boldsymbol{0}$
and $\boldsymbol{S}_{x}\succeq\boldsymbol{0}$. It is thus evident
that any $\boldsymbol{S}_{x}^{*}$ satisfying $\boldsymbol{A}(\lambda_{2},\{\mu_{k}\})\boldsymbol{S}_{x}^{*}=\boldsymbol{0}$
is optimal. 

Next, we consider the case with $\lambda_{1}>0$. As the \ac{evd}
of $\boldsymbol{A}(\lambda_{2},\{\mu_{k}\})$ is $\boldsymbol{A}(\lambda_{2},\{\mu_{k}\})=\boldsymbol{U}\boldsymbol{\Lambda}\boldsymbol{U}^{H}$
with $\boldsymbol{\Lambda}=\mathrm{diag}(\alpha_{1},\dots,\alpha_{N_{t}})$,
we have $\mathrm{tr}(\boldsymbol{A}(\lambda_{2},\{\mu_{k}\})\boldsymbol{S}_{x})=\mathrm{tr}(\boldsymbol{\Lambda}\boldsymbol{U}^{H}\boldsymbol{S}_{x}\boldsymbol{U})$.
We define 
\begin{align}
\tilde{\boldsymbol{S}_{x}}=\boldsymbol{U}^{H}\boldsymbol{S}_{x}\boldsymbol{U}{\text{~or~}}\boldsymbol{S}_{x}=\boldsymbol{U}\tilde{\boldsymbol{S}_{x}}\boldsymbol{U}^{H}.\label{eqn:S_x}
\end{align}
It is easy to verify that $\mathrm{tr}(\boldsymbol{S}_{x}^{-1})=\mathrm{tr}(\tilde{\boldsymbol{S}_{x}}^{-1})$.
Accordingly, it follows that 
\[
\mathrm{tr}(\boldsymbol{A}(\lambda_{2},\{\mu_{k}\})\boldsymbol{S}_{x})+\lambda_{1}\mathrm{tr}(\boldsymbol{S}_{x}^{-1})=\mathrm{tr}(\boldsymbol{\Lambda}\tilde{\boldsymbol{S}_{x}})+\lambda_{1}\mathrm{tr}(\tilde{\boldsymbol{S}_{x}}^{-1}),
\]
and problem (\ref{eq:dual function-1}) can be equivalently expressed
as 
\begin{align}
\underset{\tilde{\boldsymbol{S}}_{x}\succeq\boldsymbol{0}}{\min}\mathrm{tr}(\boldsymbol{\Lambda}\tilde{\boldsymbol{S}_{x}})+\lambda_{1}\mathrm{tr}(\tilde{\boldsymbol{S}_{x}}^{-1}).\label{eqn:proble:Sx:tilde}
\end{align}
Then we have the following lemma. 
\begin{lem}
\label{lemma:appendix} \textup{For a positive semidefinite matrix
$\boldsymbol{B}\in\mathbb{C}^{M\times M}$ with the $(m,n)$-th entry
being $b(m,n),$ it holds that 
\[
\mathrm{tr}(\boldsymbol{B}^{-1})\geq\sum_{i=1}^{M}\frac{1}{b(i,i)},
\]
where the equality is attained if and only if $\boldsymbol{B}$ is
diagonal.} 
\end{lem}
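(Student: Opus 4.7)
The plan is to establish the inequality entrywise on the diagonal, namely $(\boldsymbol{B}^{-1})_{ii}\geq 1/b(i,i)$ for each $i$, and then sum over $i$ to obtain the trace bound. The natural tool for the per-entry statement is the Cauchy--Schwarz inequality applied to the (Hermitian) square-root factorization $\boldsymbol{B}=\boldsymbol{B}^{1/2}\boldsymbol{B}^{1/2}$. Writing $\boldsymbol{e}_i$ for the $i$-th standard basis vector, I would observe the identity
\begin{equation*}
1 = \boldsymbol{e}_i^H\boldsymbol{e}_i = \boldsymbol{e}_i^H\boldsymbol{B}^{1/2}\boldsymbol{B}^{-1/2}\boldsymbol{e}_i = \langle \boldsymbol{B}^{1/2}\boldsymbol{e}_i,\;\boldsymbol{B}^{-1/2}\boldsymbol{e}_i\rangle,
\end{equation*}
and apply Cauchy--Schwarz to bound this inner product by $\|\boldsymbol{B}^{1/2}\boldsymbol{e}_i\|\cdot\|\boldsymbol{B}^{-1/2}\boldsymbol{e}_i\|$. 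Squaring then gives $1\leq b(i,i)\cdot(\boldsymbol{B}^{-1})_{ii}$, which rearranges to $(\boldsymbol{B}^{-1})_{ii}\geq 1/b(i,i)$. Summing over $i$ yields the desired trace inequality.

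For the equality condition, I would invoke the equality case of Cauchy--Schwarz: equality holds if and only if $\boldsymbol{B}^{1/2}\boldsymbol{e}_i$ and $\boldsymbol{B}^{-1/2}\boldsymbol{e}_i$ are linearly dependent, which is equivalent to $\boldsymbol{e}_i$ being an eigenvector of $\boldsymbol{B}$. Imposing this simultaneously for every $i\in\{1,\dots,M\}$ forces the entire standard basis to consist of eigenvectors of $\boldsymbol{B}$, i.e.\ $\boldsymbol{B}$ is diagonal. The converse direction is immediate: when $\boldsymbol{B}$ is diagonal, both sides of the inequality coincide entrywise.

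The main subtlety I anticipate is the handling of the case where $\boldsymbol{B}$ is only positive semidefinite and singular, since then $\boldsymbol{B}^{-1}$ does not literally exist; the cleanest way to address this is to state the lemma for $\boldsymbol{B}\succ\boldsymbol{0}$ (which is all that is used in \eqref{eqn:proble:Sx:tilde}, where $\mathrm{tr}(\tilde{\boldsymbol{S}}_x^{-1})$ would otherwise be $+\infty$), or equivalently to adopt the convention $1/0 = +\infty$ so that both sides become infinite when some $b(i,i)=0$. The rest is routine, and the proof itself should be short once the Cauchy--Schwarz identity above is written down.
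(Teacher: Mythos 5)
Your proof is correct, including the equality case. For the record, the paper does not prove this lemma at all --- its ``proof'' is a one-line citation to \citep[Appendix~I]{ohno2004capacity} --- so there is no in-paper argument to compare against; your Cauchy--Schwarz derivation supplies a complete, self-contained replacement. The key step, $1=\boldsymbol{e}_i^H\boldsymbol{B}^{1/2}\boldsymbol{B}^{-1/2}\boldsymbol{e}_i\le\|\boldsymbol{B}^{1/2}\boldsymbol{e}_i\|\,\|\boldsymbol{B}^{-1/2}\boldsymbol{e}_i\|$, gives the per-entry bound $(\boldsymbol{B}^{-1})_{ii}\ge 1/b(i,i)$, and since each summand obeys the same-direction inequality, equality of the sums forces equality term by term; linear dependence of $\boldsymbol{B}^{1/2}\boldsymbol{e}_i$ and $\boldsymbol{B}^{-1/2}\boldsymbol{e}_i$ (both nonzero) then makes each $\boldsymbol{e}_i$ an eigenvector of $\boldsymbol{B}$, hence $\boldsymbol{B}$ diagonal --- all of this checks out. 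The alternative classical route, which is what the cited reference essentially uses, is the partitioned-inverse (Schur complement) identity $(\boldsymbol{B}^{-1})_{ii}=\big(b(i,i)-\boldsymbol{b}_i^H\boldsymbol{B}_{(i)}^{-1}\boldsymbol{b}_i\big)^{-1}$, where $\boldsymbol{B}_{(i)}$ is the principal submatrix obtained by deleting row and column $i$; since the subtracted term is nonnegative one gets the same per-entry bound, with equality iff $\boldsymbol{b}_i=\boldsymbol{0}$ for all $i$. The two arguments are of comparable length; yours avoids introducing block partitions at the cost of invoking the matrix square root. Your caveat about the hypothesis is also well taken: as stated the lemma should really require $\boldsymbol{B}\succ\boldsymbol{0}$ (or adopt the convention that both sides are $+\infty$ when $\boldsymbol{B}$ is singular), and positive definiteness is indeed all that is needed where the lemma is applied in \eqref{eqn:proble:Sx:tilde}.
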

\begin{proof} See \citep[Appendix I]{ohno2004capacity}. \end{proof}
Based on Lemma \ref{lemma:appendix}, it follows that the optimal
solution of $\tilde{\boldsymbol{S}_{x}}$ to problem \eqref{eqn:proble:Sx:tilde}
should be diagonal. As a result, we express $\tilde{\boldsymbol{S}_{x}}=\mathrm{diag}(\tilde{\beta}_{1},\dots,\tilde{\beta}_{N_{t}})$
with $\tilde{\beta}_{i}\ge0,\forall i\in\{1,\ldots,N_{t}\}$. Accordingly,
problem \eqref{eqn:proble:Sx:tilde} is reduced as 
\begin{align}
\underset{{\{\tilde{\beta}_{i}\ge0\}}}{\min}\sum_{i=1}^{N_{t}}\alpha_{i}\tilde{\beta}_{i}+\lambda_{1}\sum_{i=1}^{N_{t}}\frac{1}{\tilde{\beta}_{i}}.\label{eqn:proble:beta}
\end{align}
It is clear that the optimal solution to problem \ref{eqn:proble:beta}
is given by 
\begin{align}
\tilde{\beta}_{i}^{*}\left\{ \begin{array}{ll}
=\lambda_{1}^{1/2}\alpha_{i}^{-1/2} & {\text{if~}}\alpha_{i}>0\\
\rightarrow\infty & {\text{if~}}\alpha_{i}=0
\end{array}\right.,\forall i\in\{1,\ldots,N_{t}\}.\label{eqn:beta:opt}
\end{align}

By combining \eqref{eqn:S_x}, Lemma \ref{lemma:appendix}, and \eqref{eqn:beta:opt},
we have the optimal solution to problem (\ref{eq:dual function-1})
as $\boldsymbol{S}_{x}^{*}=\lambda_{1}^{1/2}\boldsymbol{U}^{H}{\boldsymbol{\Sigma}}\boldsymbol{U},$
where ${\boldsymbol{\Sigma}}=\mathrm{diag}(\beta_{1}^{*},\dots,\beta_{N_{t}}^{*})$
with $\beta_{i}=\tilde{\beta}_{i}^{*}/\lambda_{1}^{1/2},$ $\forall i\in\{1,\ldots,N_{t}\}$.
This thus completes the proof.

{\footnotesize{}{} \bibliographystyle{IEEEtran}
\bibliography{IEEEabrv,IEEEexample,my_ref}
}{\footnotesize\par}
\end{document}